\newcommand{\suchthat}{\;\ifnum\currentgrouptype=16 \middle\fi|\;}
\newcommand*{\indep}{%
  \mathbin{%
    \mathpalette{\@indep}{}%
  }%
}
\newcommand*{\nindep}{%
  \mathbin{
    \mathpalette{\@indep}{\not}
  }%
}
\newcommand*{\@indep}[2]{%
  \sbox0{$#1\perp\m@th$}
  \sbox2{$#1=$}
  \sbox4{$#1\vcenter{}$}
  \rlap{\copy0}
  \dimen@=\dimexpr\ht2-\ht4-.2pt\relax
  \kern\dimen@
  {#2}%
  \kern\dimen@
  \copy0 
} 
\newcommand*{\algrule}[1][\algorithmicindent]{%
  \makebox[#1][l]{%
    \hspace*{.2em}
    \vrule height .75\baselineskip depth .25\baselineskip
  }
}
\def\ALG@printindent{%
    \ifnum \theALG@nested>0
    \ifx\ALG@text\ALG@x@notext
    \else
    \unskip
    \ALG@printindent@tempcnta=1
    \loop
    \algrule[\csname ALG@ind@\the\ALG@printindent@tempcnta\endcsname]%
    \advance \ALG@printindent@tempcnta 1
    \ifnum \ALG@printindent@tempcnta<\numexpr\theALG@nested+1\relax
    \repeat
    \fi
    \fi
}
\patchcmd{\ALG@doentity}{\noindent\hskip\ALG@tlm}{\ALG@printindent}{}{\errmessage{failed to patch}}
\patchcmd{\ALG@doentity}{\item[]\nointerlineskip}{}{}{} 
\newtheorem{prop}{Proposition}
\begin{document}

\title{MIMO SWIPT Systems with Power Amplifier Nonlinearities and Memory Effects}
\author{Authors}
\author{Priyadarshi Mukherjee, \textit{Member, IEEE}, Souhir Lajnef, and~Ioannis Krikidis, \textit{Fellow, IEEE}
\thanks{P. Mukherjee, S. Lajnef, and I. Krikidis are with the Department of Electrical and Computer Engineering, University of Cyprus, Nicosia 1678 (E-mail: \{mukherjee.priyadarshi, lajnef.souhir, krikidis\}@ucy.ac.cy). This work was co-funded by the European Regional Development Fund and the Republic of Cyprus through the Research and Innovation Foundation, under the projects INFRASTRUCTURES/1216/0017 (IRIDA) and EXCELLENCE/1918/0377 (PRIME). It has also received funding from the European Research Council (ERC) under the European Union's Horizon 2020 research and innovation programme (Grant agreement No. 819819).}
}

\maketitle
\begin{abstract}
In this letter, we study the impact of nonlinear high power amplifier (HPA) on simultaneous wireless information and power transfer (SWIPT), for a point-to-point multiple-input multiple-output communication system. We derive the rate-energy (RE) region by taking into account the HPA nonlinearities and its associated memory effects. We show that HPA significantly degrades the achievable RE region, and a predistortion technique is investigated for compensation. The performance of the proposed predistortion scheme is evaluated in terms of RE region enhancement. Numerical results demonstrate that approximately $24\%$ improvement is obtained for both power-splitting and time-splitting SWIPT architectures.
\end{abstract}

\begin{IEEEkeywords}
SWIPT, high power amplifier, memory effects, predistortion, PAPR.
\end{IEEEkeywords}

\IEEEpeerreviewmaketitle

\section{Introduction}

\lettrine[lines=2]{S}{imultaneous} wireless information and power transfer (SWIPT) is a promising technology to provide energy sustainability and ubiquitous connectivity to low power wireless devices \cite{intro1,intro2}. As there will be huge deployment of smart devices in the era of massive machine-type communications and internet of things, it is practically impossible to regularly recharge these devices \cite{srv}. Hence, SWIPT has gained increased attention over the last few years, as it enables simultaneous communication and energy harvesting through fully-controlled dedicated radio-frequency (RF) signals. 

For proper implementation of SWIPT, we require energy harvesting at the receiver with higher efficiency. Various strategies and signal waveforms exist in literature \cite{intro2,plinear} that take into account the nonlinearity of the rectification circuit. Specifically, experimental results demonstrate that waveforms with high peak-to-average-power-ratio (PAPR) e.g., multi-sine, chaotic signals etc, boost the wireless energy transfer efficiency \cite{papr1,papr2}. However, high PAPR signals are more susceptible to high-power amplifier (HPA) nonlinearities.

A typical HPA is unable to successfully transmit unclipped high PAPR signals. As a result, the transmitted signal gets distorted even before its transmission. Moreover, a practical HPA also exhibits some degree of memory i.e., the HPA output not only depends on the current HPA input, but also on the past HPA inputs \cite{mmry}. Despite these experimentally demonstrated observations, existing works do not consider the effects of HPA on the rate-energy (RE) performance of SWIPT. Hence we investigate these effects in SWIPT systems.

In this letter, we model the characteristics of HPA and we study its effects by considering a basic point-to-point multiple-input-multiple-output (MIMO) SWIPT system. We prove that the distortions introduced by the HPA nonlinearities and its associated memory effects are not independent of the input signal; these HPA-based practical impairments are taken into account in our SWIPT analysis. We characterize the achievable RE region of the considered MIMO SWIPT system, and we show that HPA significantly squeezes SWIPT performance. As a result, a noteworthy enhancement of the RE region is attained by opting for predistortion, which linearizes the HPA performance up to its saturation; the proposed predistortion enables the transmission of unclipped high PAPR signals. To the best of authors' knowledge, this is the first work that deals with the impact of PAPR on MIMO SWIPT systems. 

\section{System Model} \label{msec}

\begin{figure*}[!t]
\centering\includegraphics[width=0.8\linewidth]{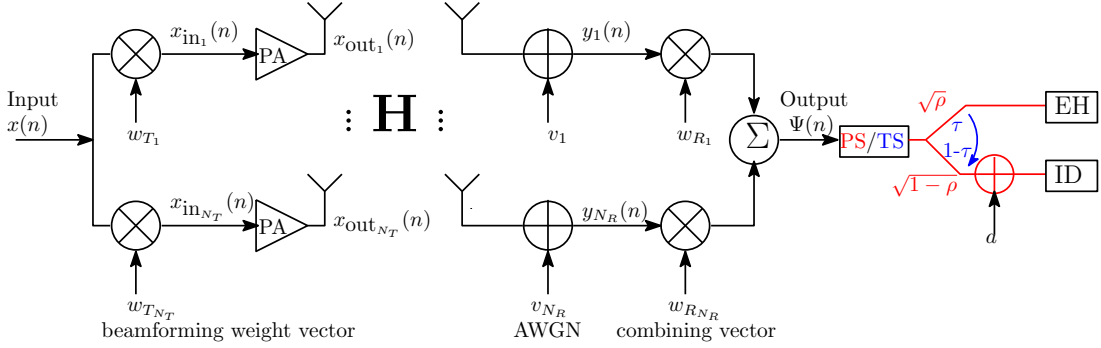}
\caption{Block diagram of the considered MIMO system with nonlinear dynamic HPAs.}
\label{fig:smodel}
\vspace{-4mm}
\end{figure*}
We consider a point-to-point MIMO system as illustrated in Fig. \ref{fig:smodel}, with $N_T$ transmit and $N_R$ receive antennas. Assuming a discrete time baseband channel with uncorrelated quasi-static frequency flat fading, the received signal in symbol time $n$ is
\vspace{-2mm}
\begin{equation}
\mathbf{y}=\mathbf{H}\mathbf{x_{\text{out}}}+\mathbf{v},
\end{equation}
where $\mathbf{y}$$=$$[y_1(n),\cdots,y_{N_R}(n)]^T$ is the $N_R\times1$ received signal vector, $\mathbf{x_{\text{out}}}$$=$$[x_{\text{out}_1}(n),\cdots,x_{\text{out}_{N_T}}(n)]^T$ is the $N_T\times1$ HPA output vector at the transmitter, $\mathbf{v}$ is the $N_R\times1$ noise vector with elements belonging to independent and identically distributed (i.i.d.) complex circular Gaussian distribution i.e., $\mathcal{CN}(0,\sigma_v^2)$ and uncorrelated with the transmitted symbols, and $\mathbf{H}$ denotes the $N_R\times N_T$ channel gain matrix. The entries of $\mathbf{H}$ are i.i.d. complex Gaussian random variables, each with a $\mathcal{CN}(0,1)$ distribution. The vector $\mathbf{x_{\text{out}}}$ can be expressed as $\mathbf{x_{\text{out}}}$$=$$f_{\text{HPA}}(\mathbf{x_{\text{in}}})$, where $f_{\text{HPA}}(\cdot)$ is the HPA function and $\mathbf{x_{\text{in}}}$$=$$[x_{\text{in}_1}(n),\cdots,x_{\text{in}_{N_T}}(n)]^T$ is the $N_T\times1$ HPA input vector. In particular, $\mathbf{x_{\text{in}}}$ depends on the $N_T\times1$ unit-norm beamforming vector $\mathbf{w}_{T}$$=$$[w_{T_1}(n),\cdots,w_{T_{N_T}}(n)]^T$ and the primary transmitted symbol $x(n)$ with power $P_{\text{t}}$.

Signals received from $N_R$ branches of the receiver are combined with a $N_R$$\times$$1$ combining vector $\mathbf{w}_R$$=$ $[w_{R_1}(n),\cdots,w_{R_{N_R}}(n)]^T$ to obtain the signal $\Psi(n)$ and fed to a SWIPT block, which can be both time-switching (TS) or power-switching (PS) in nature. Let $\rho,\tau \in [0,1]$ denote the power and time splitting parameter, respectively \cite{intro2}. The information decoder (ID) extracts the information content from $\Psi(n)$ in the presence of an additional circuit noise $a$, caused due to the RF-to-baseband conversion; this additional noise is modelled as $a \sim \mathcal{CN}(0,\sigma_{a}^2)$.

The harvested energy $\mathbb{E}_{\text{H}}$ can be expressed in terms of the input power $\xi$ at the energy harvester (EH) as $\mathbb{E}_{\text{H}}=p_h(\xi)$, where $p_h(\cdot)$ is a non-decreasing function of its argument.

In general, $p_h(\cdot)$ is a monotonic nonlinear function and is modelled by a piece-wise linear model \cite{plinear} i.e.,
\vspace{-2mm}
\begin{align} \label{hmodel}
\mathbb{E}_{\text{H}}=p_h(\xi)= 
\begin{cases}
0 & \xi \leq p_h^{l},\\
\eta\xi & p_h^{l} \leq \xi \leq p_h^{u},\\
\eta p_h^{u} & \xi > p_h^{u},
\end{cases}
\vspace{-2mm}
\end{align}
where $p_h^{l}$ is the minimum RF input power required for harvesting, $p_h^{u}$ is the saturation RF input power level above which the harvested output power remains practically constant, and $\eta \in [0,1]$ is the constant energy\footnote{Energy and power are used indistinctly in this work, which can be interpreted in terms of normalized symbol duration.} efficiency when $\xi \in [p_h^{l}, p_h^{u}]$.

\section{HPA model}

In this section, we characterize the HPA by taking into account the practical imperfections associated with it. This aspect of HPA modeling is extremely crucial for SWIPT, as it has a direct impact on the performance for both information and energy transfer (see the diagram in Fig. \ref{fig:smodel}). 

\subsection{HPA model characterization}

The practical HPA output differs from the ideal response due to the presence of nonlinearities and memory effects\cite{dpdtc} that severely degrade the system performance. A practical HPA output signal $x_{\text{out}}(n)$ depends on both the present and past HPA input signals i.e., $x_{\text{out}}(n)=f(x_{\text{in}}(n),x_{\text{in}}(n-1),x_{\text{in}}(n-2),\cdots),$ where $x_{\text{in}}(n)$ is the  HPA input signal during the $n-$th symbol duration. The memory effects are normally attributed to time delays and/or phase shifts in the matching networks and circuit elements used. Existing works \cite{pa1,pa2} study the impact of HPA nonlinearities by considering a memoryless model. As a result, the dynamic behavior of HPA is not captured. To closely mimic the nonlinear HPA behavior with memory effects, a low-complexity and mathematically tractable memory polynomial model (MPM) is adopted to model the HPA i.e.,
\vspace{-2mm}
\begin{equation} \label{def}
x_{\text{out}}(n)=\sum_{p=1}^{P}\sum_{m=0}^{M}c_{p,m}x_{\text{in}}(n-m)|x_{\text{in}}(n-m)|^{p-1},
\end{equation}
where $P$ is the non-linearity order, $M$ is memory depth, and $c_{p,m}$ with $p=1,\cdots,P,$ $m=0,\cdots,M,$ denotes the complex MPM coefficients \cite{dpdtc}. The memory depth $(M)$ implies that $x_{\text{out}}(n)$ depends on the past $M$ input samples i.e, $x_{\text{out}}(n)=f(x_{\text{in}}(n),\cdots,x_{\text{in}}(n-M))$. Accordingly, (\ref{def}) can be written in a generic form as $x_{\text{out}}(n)=\boldsymbol{\Phi}\text{C}_{\text{HPA}}$, where
\vspace{-2mm}
\begin{equation*}
\boldsymbol{\Phi}= 
\begin{bmatrix}
x_{\text{in}}(n) \\
\vdots \\
x_{\text{in}}(n-M)\\
x_{\text{in}}(n)|x_{\text{in}}(n)|\\
\vdots \\
x_{\text{in}}(n-M)|x_{\text{in}}(n-M)|\\
\vdots \\
x_{\text{in}}(n-M)|x_{\text{in}}(n-M)|^{P-1}
\end{bmatrix}^T, \quad \text{and}
\end{equation*}
$\text{C}_{\text{HPA}}=[c_{1,0},\cdots,c_{1,M},c_{2,0},\cdots,c_{2,M},\cdots,c_{P,M}]^T.$ Thus for a given set of input-output sample pairs, we recursively obtain $\text{C}_{\text{HPA}}$ by solving an over-determined system of equations by using a least square method \cite[Chapter~6]{pabk}.

\vspace{-2mm}

\subsection{Analytical insights into HPA output $x_{\text{out}}(n)$}

From (\ref{def}), we observe that $x_{\text{out}}(n)$ jointly depends on $x_{\text{in}}(n)$, HPA memory, and nonlinearities. By rewriting (\ref{def}), we have
\vspace{-2mm}
\begin{align} \label{buss1}
x_{\text{out}}(n)=&\sum_{p=1}^{P}\sum_{m=0}^{M}c_{p,m}x_{\text{in}}(n-m)|x_{\text{in}}(n-m)|^{p-1}\\
=& \underbrace{\sum_{p=1}^P c_{p,0}|x_{\text{in}}(n)|^{p-1}}_{\text{scaling factor}}\underbrace{x_{\text{in}}(n)}_{\text{desired signal}} \nonumber \\
& + \underbrace{\sum_{p=1}^{P}\sum_{m=1}^{M}c_{p,m}x_{\text{in}}(n-m)|x_{\text{in}}(n-m)|^{p-1}}_{\text{distortion}} \nonumber\\
=& \Delta_{\text{HPA}}(x_{\text{in}}(n))x_{\text{in}}(n) + \delta_{\text{HPA}}(x_{\text{in}}(n)), \nonumber
\end{align}
where $\Delta_{\text{HPA}}(x_{\text{in}}(n))$ and $\delta_{\text{HPA}}(x_{\text{in}}(n))$ denote the scaling factor and the distortion corresponding to the HPA input signal $x_{\text{in}}(n)$, respectively. From (\ref{buss1}), we observe that $\Delta_{\text{HPA}}(x_{\text{in}}(n))$ captures the nonlinearity of HPA, while $\delta_{\text{HPA}}(x_{\text{in}}(n))$ represents both the nonlinearities and memory effects.

Existing works \cite{pa1,pa2} characterize the impact of HPA nonlinearities based on Bussgang's theorem \cite{bgang}, which states that the HPA output signal $x_{\text{out}}(n)$ can be represented as a sum of the scaled useful signal and an uncorrelated nonlinear distortion component. However, we prove the contrary below.

\begin{prop}
The distortion generated at the output of each HPA is correlated with the input signal i.e.,
\vspace{-2mm}
\begin{equation}
\mathbb{E}\{x^*_{\text{in}}(n)\delta_{\text{HPA}}(x_{\text{in}}(n))\}\neq 0.
\end{equation}
\end{prop}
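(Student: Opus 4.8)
The plan is to evaluate the cross-correlation directly from the definition of $\delta_{\text{HPA}}$ in (\ref{buss1}) and show that it cannot vanish once the input carries temporal memory. First I would substitute the distortion term and use linearity of expectation to obtain
\begin{align}
&\mathbb{E}\{x^*_{\text{in}}(n)\delta_{\text{HPA}}(x_{\text{in}}(n))\} \nonumber\\
&\quad= \sum_{p=1}^{P}\sum_{m=1}^{M}c_{p,m}\,\mathbb{E}\{x^*_{\text{in}}(n)x_{\text{in}}(n-m)|x_{\text{in}}(n-m)|^{p-1}\}, \nonumber
\end{align}
so the task reduces to evaluating the cross-moments at the memory lags $m\geq1$.

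The crucial observation, and the reason the claim is true, is that the HPA input is correlated in time; this temporal correlation is exactly what makes the memory effects relevant in the first place. I would therefore model $x_{\text{in}}(n)$ as a zero-mean, circularly symmetric complex Gaussian process with autocorrelation $R_x(m)=\mathbb{E}\{x^*_{\text{in}}(n)x_{\text{in}}(n-m)\}$ that is nonzero for $m\geq1$. For a jointly Gaussian pair, Bussgang's lemma \cite{bgang} yields $\mathbb{E}\{X^*g(Y)\}=\frac{\mathbb{E}\{X^*Y\}}{\mathbb{E}\{|Y|^2\}}\mathbb{E}\{Y^*g(Y)\}$ for any function $g$; applying it with $X=x_{\text{in}}(n)$, $Y=x_{\text{in}}(n-m)$ and $g(Y)=Y|Y|^{p-1}$ reduces each cross-moment to $\frac{R_x(m)}{R_x(0)}\mathbb{E}\{|x_{\text{in}}(n-m)|^{p+1}\}$, i.e.\ a strictly positive moment multiplying the lag-$m$ autocorrelation.

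Collecting terms, the cross-correlation would become
\begin{align}
&\mathbb{E}\{x^*_{\text{in}}(n)\delta_{\text{HPA}}(x_{\text{in}}(n))\} \nonumber\\
&\quad=\frac{1}{R_x(0)}\sum_{m=1}^{M}R_x(m)\sum_{p=1}^{P}c_{p,m}\,\mathbb{E}\{|x_{\text{in}}(n-m)|^{p+1}\}, \nonumber
\end{align}
a linear combination of the nonzero-lag autocorrelations $R_x(1),\dots,R_x(M)$ weighted by positive moments and the memory coefficients $c_{p,m}$. Because the input retains memory ($R_x(m)\neq0$) and the HPA has at least one active memory tap ($c_{p,m}\neq0$ for some $m\geq1$), this sum is nonzero, so $\delta_{\text{HPA}}$ is correlated with $x_{\text{in}}(n)$ and the Bussgang-type independence assumption fails.

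The hard part is not the algebra but the input statistics: were $x_{\text{in}}(n)$ white (i.i.d.\ across symbols), every $m\geq1$ cross-moment would factor through $\mathbb{E}\{x^*_{\text{in}}(n)\}=0$ and the proposition would be \emph{false}. The proof therefore hinges on justifying that the amplified waveform, especially a high-PAPR signal, is temporally correlated. A secondary concern is ruling out an accidental cancellation of the weighted sum over $(p,m)$; I would address this by noting that the correlation lags and the MPM coefficients are independent physical parameters, so exact cancellation is non-generic, and, if a fully rigorous statement is required, by isolating the $p=1$ contribution $\sum_{m=1}^{M}c_{1,m}R_x(m)$ under mild conditions on the $c_{1,m}$.
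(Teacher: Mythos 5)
Your proposal is correct (under its stated assumptions) and follows the same skeleton as the paper's appendix proof: expand $\delta_{\text{HPA}}$, pull the expectation inside the double sum, and reduce each term to the lag-$m$ autocorrelation of the input, concluding that the correlation cannot vanish when the input has memory. The difference lies in how the cross-moment $\mathbb{E}\{x^*_{\text{in}}(n)\,x_{\text{in}}(n-m)|x_{\text{in}}(n-m)|^{p-1}\}$ is handled, and here your treatment is actually sounder. The paper simply factors it as $\mathbb{E}\{x^*_{\text{in}}(n)x_{\text{in}}(n-m)\}\,\mathbb{E}\{|x_{\text{in}}(n-m)|^{p-1}\}$, which is not valid in general: both factors involve the same random variable $x_{\text{in}}(n-m)$, so the expectation of the product does not split without an extra (unstated) independence argument. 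You instead invoke the complex Bussgang/Price identity for jointly Gaussian pairs, obtaining $\frac{R_x(m)}{R_x(0)}\mathbb{E}\{|x_{\text{in}}(n-m)|^{p+1}\}$, which is the correct closed form under a Gaussian-process model, at the price of a distributional assumption the paper never makes (and there is a pleasant irony in using Bussgang's own lemma to refute the Bussgang-style uncorrelatedness claim). Your two further observations also improve on the paper: first, the proposition is genuinely false for a temporally white input, since every $m\geq 1$ term then factors through $\mathbb{E}\{x^*_{\text{in}}(n)\}=0$, so nonzero input autocorrelation at lags $1,\dots,M$ is a necessary hypothesis that the paper leaves implicit in writing $R_{m,n}$; second, the final step (a weighted sum of nonzero terms being nonzero) needs the non-cancellation caveat you supply, whereas the paper asserts $\neq 0$ without comment. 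In short: same route, but your version repairs the one invalid equality in the paper's proof and makes the hidden hypotheses explicit.
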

\begin{proof}
The proof is provided in the Appendix.
\end{proof}
As the proposition implies, the distortion is correlated with the input signal and this effect cannot be ignored, if we are investigating the role of HPA in SWIPT systems.

\section{HPA-based MRC/MRT Beamforming}

We study the impact of HPA on a point-to-point MIMO communication system that employs maximum-ratio combining/transmission (MRC/MRT) beamforming schemes. We consider MRC/MRT MIMO as a simple case to demonstrate the impact of HPA in MIMO SWIPT architectures.

From Section \ref{msec}, the HPA output vector $\mathbf{x_{\text{out}}}$$=$$f_{\text{HPA}}(\mathbf{x_{\text{in}}})$, where $f_{\text{HPA}}(\cdot)$ is the HPA function, and $\mathbf{x_{\text{in}}}$ is the HPA input vector. We also know that $\mathbf{x_{\text{in}}}$$=$$\mathbf{w}_Tx(n)$ depends on the primary transmitted symbol $x(n)$ and the beamforming vector $\mathbf{w}_T$. Hence, we have $\mathbf{x_{\text{out}}}$$=$$f_{\text{HPA}}(\mathbf{w}_Tx(n))$$=$$\mathbf{\Delta_{\text{HPA}}}\circ\mathbf{w}_Tx(n)+\boldsymbol{\delta_{\text{HPA}}}$, where $\circ$ refers to the Hadamard product, $\boldsymbol{\Delta}_{\text{HPA}}=[\Delta_{\text{HPA}_1}(x_{\text{in}_1}(n)),\cdots,\Delta_{\text{HPA}_{N_T}}(x_{\text{in}_{N_T}}(n))]^T$, and $\boldsymbol{\delta}_{\text{HPA}}=[\delta_{\text{HPA}_1}(x_{\text{in}_1}(n)),\cdots,\delta_{\text{HPA}_{N_T}}(x_{\text{in}_{N_T}}(n))]^T$. Accordingly, we obtain
\vspace{-2mm}
\begin{align}
\mathbf{y}=&\:\:\mathbf{H}(\mathbf{\Delta}_\text{HPA}\circ \mathbf{w}_T x(n)+\boldsymbol{\delta_{\text{HPA}}})+\mathbf{v} \\
=&\:\:\mathbf{H}(\mathbf{\Delta}_\text{HPA}\circ \mathbf{w}_T) x(n)+\mathbf{H}\boldsymbol{\delta_{\text{HPA}}}+\mathbf{v}. \nonumber
\end{align}

At the receiver, signals from all the branches are weighted by the combining vector $\mathbf{w}_R$ to obtain the detected signal as
\vspace{-2mm}
\begin{align} \label{dvar}
\Psi(n)=&\mathbf{w}_R^H\mathbf{y} \\
=&\mathbf{w}_R^H\mathbf{H}(\mathbf{\Delta}_\text{HPA}\circ \mathbf{w}_T) x(n)+\mathbf{w}_R^H\mathbf{H}\boldsymbol{\delta_{\text{HPA}}}+\mathbf{w}_R^H\mathbf{v}, \nonumber
\end{align}
where $(\cdot)^H$ represents the Hermitian transpose of a matrix. To maximize the signal component, the optimal $\mathbf{w}_T$ is chosen as
\vspace{-2mm}
\begin{equation} \label{wt}
\mathbf{w}_T=\frac{\mathbf{H}^H\mathbf{w}_R}{||\mathbf{H}^H\mathbf{w}_R||} \varoslash \mathbf{\Delta}_\text{HPA},
\end{equation}
where $\varoslash$ denotes the Hadamard division and $||\cdot||$ is the Euclidean norm. Substituting (\ref{wt}) in (\ref{dvar}) we obtain
\vspace{-2mm}
\begin{equation}
\Psi(n)=||\mathbf{H}^H\mathbf{w}_R||x(n)+\mathbf{w}_R^H\mathbf{H}\boldsymbol{\delta_{\text{HPA}}}+\mathbf{w}_R^H\mathbf{v}.
\end{equation}
It is worth noting that due to the nonlinearity and the memory effects of HPA, we have the additional term $\mathbf{w}_R^H\mathbf{H}\boldsymbol{\delta_{\text{HPA}}}$. As a result, the output SNR conditioned on $\mathbf{H}$ is
\vspace{-2mm}
\begin{align}
\gamma&=\frac{P_{\text{t}}||\mathbf{H}^H\mathbf{w}_R||^2}{||\mathbf{w}_R^H\mathbf{H}\boldsymbol{\delta_{\text{HPA}}}||^2+||\mathbf{w}_R^H\mathbf{v}||^2}\\
&=\frac{P_{\text{t}}\mathbf{w}_R^H (\mathbf{H}\mathbf{H}^H)\mathbf{w}_R}{||\mathbf{w}_R^H\mathbf{H}\boldsymbol{\delta_{\text{HPA}}}||^2+\sigma_v^2||\mathbf{w}_R||^2}. \nonumber
\end{align}
\vspace{-2mm}
\begin{figure}[t]
\centering\includegraphics[width=0.56\linewidth]{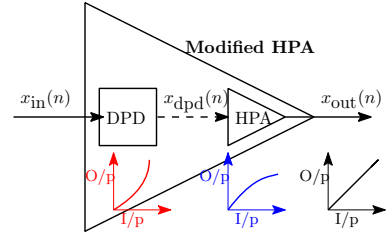}
\caption{Predistortion system (Cascade of DPD followed by HPA).}
\label{fig:model}
\vspace{-4mm}
\end{figure}

\begin{figure*}[!t]
\begin{align} \label{reregion}
\mathbb{C}_{RE}= 
\begin{cases}
\displaystyle \bigcup\limits_{0 \leq \tau \leq 1} \left\lbrace (R,E): R \leq (1-\tau)\log_2\left(1+ \frac{P_{\text{t}}\Lambda_{\max}||\mathbf{w}_R^*||^2}{\zeta||\mathbf{w}_R^{*H}\mathbf{H}\boldsymbol{\delta_{\text{HPA}}}||^2+\sigma_v^2||\mathbf{w}_R^*||^2+\sigma_a^2} \right), E \leq \tau p_h(\Psi ) \right\rbrace , \quad \:\:\text{TS}\\
\displaystyle \bigcup\limits_{0 \leq \rho \leq 1} \left\lbrace (R,E): R \leq \log_2\left(1+ \frac{(1-\rho)P_{\text{t}}\Lambda_{\max}||\mathbf{w}_R^*||^2}{(1-\rho)(\zeta||\mathbf{w}_R^{*H}\mathbf{H}\boldsymbol{\delta_{\text{HPA}}}||^2+\sigma_v^2||\mathbf{w}_R^*||^2)+\sigma_a^2} \right), E \leq \rho p_h(\Psi) \right\rbrace , \quad \text{PS}.
\end{cases}
\end{align}
\hrulefill
\vspace{-2mm}
\end{figure*}

Note that if we ignore the HPA effects, we have $\boldsymbol{\delta_{\text{HPA}}}=0$. Hence, maximizing $\gamma$ with respect to $\mathbf{w}_R$ is equivalent to finding the squared-spectral norm of $\mathbf{H}.$ This squared-spectral norm is the maximum eigen-value of $\mathbf{H}\mathbf{H}^H$ i.e., $\Lambda_{\max}$, and the eigenvector of $\mathbf{H}\mathbf{H}^H$ corresponding to $\Lambda_{\max}$ is the optimum $\mathbf{w}_R^*,$ which maximizes $\gamma$ to yield $\gamma_{\max}$ \cite{mallik}. As a result when $\boldsymbol{\delta_{\text{HPA}}}=0$, we obtain the following classical expression of $\gamma_{\max}$ corresponding to an ideal $N_T$$\times$$N_R$ MIMO system i.e., $\displaystyle \gamma_{\max}=\frac{P_{\text{t}}}{\sigma_v^2}\Lambda_{\max}$; however this is not the case here as $\boldsymbol{\delta_{\text{HPA}}}\neq0$.


We derive the RE region of both PS and TS architectures in (\ref{reregion}) by using $\gamma_{\max}$. The effect of HPA nonlinearities and memory effects can be seen in (\ref{reregion}), where we note the presence of an additional term i.e., $||\mathbf{w}_R^{*H}\mathbf{H}\boldsymbol{\delta_{\text{HPA}}}||^2$ in the denominator; $\zeta$$\in$$\{0,1\}$ is a binary variable denoting the effect of HPA on $\gamma_{\max}$. We have $\zeta=0$ when predistortion is employed (discussed next) and $\zeta=1$ otherwise.

\section{HPA Predistortion Scheme}

In the previous section, we observe that $\gamma_{\max}$ is affected by the HPA non-linearities and memory effects, which in turn negatively affects the RE region (in both PS and TS architectures). Higher the signal power, greater is the value of $||\mathbf{w}_R^{*H}\mathbf{H}\boldsymbol{\delta_{\text{HPA}}}||^2$ and more significant is its effect on $\gamma_{\max}$ in (\ref{reregion}). To overcome this effect, we propose the use of digital predistortion (DPD) \cite{rev1} technique.

To linearly amplify a high PAPR signal, the peak power of the HPA input signal must remain in the linear region of the HPA and this affects power efficiency, which is a crucial parameter for SWIPT. The DPD scheme allows a higher operating point since enables the HPA to operate linearly over its full range up to saturation. The addition of DPD to HPA, is analogous to precoding in conventional wireless communication. Hence, the HPA block in Fig. \ref{fig:smodel} is essentially now the joint DPD-HPA block as seen in Fig. \ref{fig:model}.

An ideal DPD function is the inverse of the HPA model \cite{dpdtc} derived in (\ref{buss1}). However, it is difficult to analytically characterize a DPD function \footnote{We observe that the number of coefficients $P(M+1)$ grows rapidly with $P$ and/or $M$. However as it can be seen from the numerical results, small values of $P,M$ are sufficient to  produce an accurate HPA model. Please note that the additional cost of using DPD at the transmitter end, along with its associated complexity is beyond the scope of this work.}. Hence, we follow an iterative approach and propose Algorithm \ref{Algo:AL1} for obtaining an estimate of the optimal DPD function $x_{\text{dpd}}^{\text{opt}}(n)$.

From (\ref{buss1}), we can express $x_{\text{out}}(n)$ as
\vspace{-2mm}
\begin{equation}
x_{\text{out}}(n)=\Delta_{\text{HPA}}(x_{\text{dpd}}(n))x_{\text{dpd}}(n) + \delta_{\text{HPA}}(x_{\text{dpd}}(n)),
\end{equation}
where $x_{\text{dpd}}(n)$ is the DPD output signal corresponding to $x_{\text{in}}(n).$ Ideally, we expect the HPA output to be identical to the DPD input i.e., $x_{\text{out}}(n)=x_{\text{in}}(n)$. Hence we obtain
\vspace{-2mm}
\begin{equation}
x_{\text{dpd}}(n)=\frac{1}{\Delta_{\text{HPA}}(x_{\text{dpd}}(n))}(x_{\text{in}}(n)-\delta_{\text{HPA}}(x_{\text{dpd}}(n))),
\end{equation}
by assuming known HPA characteristics i.e., $M,P,c_{pm}$ $\forall$ $m=1,\cdots,M$ and $p=1,\cdots,P$. However, we observe from (\ref{buss1}) that $\Delta_{\text{HPA}}(x_{\text{dpd}}(n))$ is a function of $|x_{\text{dpd}}(n)|$ as well. Algorithm \ref{Algo:AL1} converges to $x_{\text{dpd}}^{\text{opt}}(n)$ after a few iterations due to the adopted least square method in (\ref{def}). In this way we obtain the DPD function in an iterative manner.
\vspace{-2mm}
\begin{algorithm}[t]
{\small
\caption{Algorithm to estimate $x_{\text{dpd}}^{\text{opt}}(n)$\label{Algo:AL1}}
\begin{algorithmic}[1]
\Require $x_{\text{in}}(n),M,P,c_{pm}$ $\forall$ $m=1,\cdots,M$ and $p=1,\cdots,P$, and acceptable tolerance $\epsilon>0$
\Ensure $x_{\text{dpd}}^{\text{opt}}(n)$
\State Initialize $x_{\text{dpd}}(n)$ with $x_{\text{in}}(n)$
\State Set $k=0$
\State Calculate estimate $\widehat{x}_{\text{dpd}\_k}(n)$ using $\Delta_{\text{HPA}}(x_{\text{dpd}}(n))$
\While{$|\widehat{x}_{\text{dpd}\_k}(n)-x_{\text{dpd}}(n)|>\epsilon$}
\State Calculate a better $\widehat{x}_{\text{dpd}\_k}(n)$ using $|\widehat{x}_{\text{dpd}\_k}(n)|$ in $\Delta_{\text{HPA}}(\cdot)$
\State Set $k=k+1$
\EndWhile
\State Set $x_{\text{dpd}}^{\text{opt}}(n)=\widehat{x}_{\text{dpd}\_k}(n)$
\end{algorithmic}
}
\end{algorithm}

\section{Numerical Results}
Computer simulations are carried-out to evaluate the effects of HPA on SWIPT performance as well as the efficiency of the proposed DPD scheme. 

\subsection{Validation of Proposed Model}

Fig. \ref{fig:psd} shows the power spectra of a 20-MHz  four-channel WCDMA signal at the output of HPA, both with and without DPD. Non-linearity order $P=7$ and memory depth $M=3$ have been considered. These values were selected because the MPM attained a minimum error with this set of values.

Simulated outputs confirm the spectrum regrowth caused by the HPA nonlinearity and memory effects. The figure also demonstrates that the out-of-band distortion has been reduced considerably due to DPD and that the in-band distortion has been eliminated as well. This is evident in the improvement in adjacent channel power ratio values from $25$ dBc without DPD to $47$ dBc after DPD linearization.
\begin{figure}[t]
\centering\includegraphics[width=0.92\linewidth]{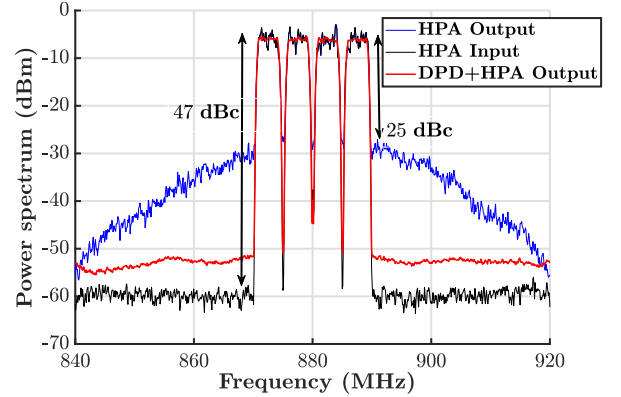}
\caption{Simulated output spectra of class-AB HPA \cite{power} before and after the DPD linearization.}
\vspace{-4mm}
\label{fig:psd}
\end{figure}

Fig. \ref{fig:ccdf} demonstrates the complementary cumulative distribution of the output signal against its corresponding PAPR. Note that the output power is significantly clipped in the absence of a DPD; a performance gap of approximately $3$ dB can be observed. As a result, a low power is transmitted regardless of the actual power of signal, which leads to low power reaching the EH block as depicted in Fig. \ref{fig:smodel}. On the other hand, we observe that the PAPR of the HPA output signal with DPD is very close to that of the HPA input signal i.e., DPD improves the PAPR and refines the original signal by increasing the PAPR of the predistorted signal. Hence, we can state that predistortion is extremely beneficial to applications like SWIPT, where high PAPR signals are extremely useful.
\begin{figure}[t]
\centering\includegraphics[width=0.92\linewidth]{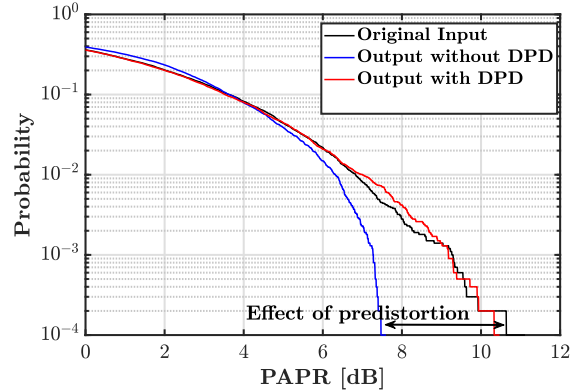}
\caption{Effect of predistortion on the HPA output signal.}
\vspace{-4mm}
\label{fig:ccdf}
\end{figure}

\subsection{Effect on SWIPT}

We investigate the effect of predistortion on SWIPT. We consider a MIMO system with $N_T=3$ and $N_R=2$ in a Rayleigh fading scenario with $12$ metres Tx-Rx distance, $2.6$ pathloss exponent, and $14$ dBm transmit power. Values of $p_h^l,p_h^u,$ and $\eta$ in (\ref{hmodel}) are $-10$ dBm, $2$ dBm, and $24\%$ respectively, and $\sigma_v^2=-70$ dBm and $\sigma_a^2=-50$ dBm. We consider a class AB HPA \cite{power} and assume that the HPAs in different antennas exhibit identical input-output relation \cite{pa2}.

\subsubsection{Spectral Efficiency}

Fig. \ref{fig:rate} demonstrates the effect of DPD on the spectral efficiency of considered system, where marginal improvement of $\sim 3\%$ is observed. Moreover the performance of the schemes, both with and without DPD, merge as $P_{\text{t}}$ enters the saturation region; this is intuitive as DPD enables the HPA to linearize its performance upto the saturation point and not beyond.

\begin{figure}[t]
\centering\includegraphics[width=0.94\linewidth]{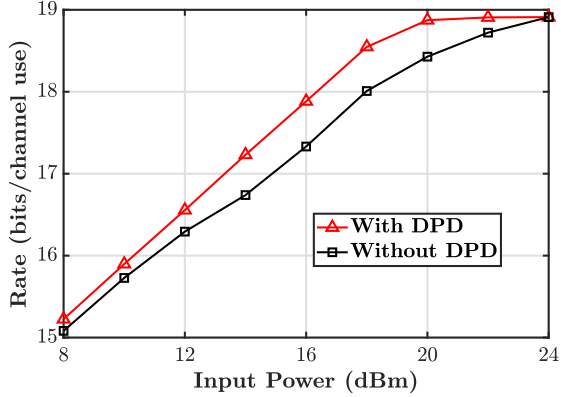}
\caption{Effect of HPA distortion and predistortion on data rate.}
\label{fig:rate}
\vspace{-4mm}
\end{figure}
\subsubsection{RE region}

We investigate the predistortion effect on SWIPT in terms of RE region enhancement. We observe from (\ref{reregion}) that the RE region without DPD suffers from an additional distortion, which limits the system performance. However, this distortion is eliminated by the use of the DPD block. Fig. \ref{fig:region} demonstrates that the RE region is enhanced by approximately $24\%$ in both the PS and TS scenarios due to DPD. Note that the maximum performance enhancement is observed at the points A and B $(\sim 21.04\%)$, while the minimum at the points C and D $(\sim 3.36\%)$. This is because A, B refer to only energy harvesting with no information decoding while C, D imply the other extreme. This means that a predistortion does not result in much gain in terms of data rate (as seen in Fig. \ref{fig:rate}), as in terms of harvested energy. This interesting observation demonstrates that the performance enhancement will be more significant if predistortion is used in applications, where energy transfer is the primary objective i.e., $\rho,\tau \rightarrow 1$ in PS and TS architectures, respectively.

\section{Conclusion}
In this work, we have studied the effects of HPA on the performance of a point-to-point MIMO SWIPT system. We have shown that dynamic HPA is characterized by nonlinearities and memory that significantly degrade the achievable RE region. It has been shown that the employment of a digital predistorter before the HPA at the transmitter, enables the transmission of unclipped high PAPR signals. The predistortion results in a RE region enhancement of the considered system; approximately $24\%$ gain for both PS and TS architectures. A promising extension of this work is to consider more sophisticated MIMO SWIPT architectures.

\begin{appendix}
The correlation between the input signal and the distortion
generated at the output of HPA
\begin{align}
&\mathbb{E}\{x^*_{\text{in}}(n)\delta_{\text{HPA}}(x_{\text{in}}(n))\} \\
&= \mathbb{E}\left\lbrace x^*_{\text{in}}(n)\sum_{p=1}^{P}\sum_{m=1}^{M}c_{p,m}x_{\text{in}}(n-m)|x_{\text{in}}(n-m)|^{p-1} \right\rbrace \nonumber \\
&= \sum_{p=1}^{P}\sum_{m=1}^{M}c_{p,m}\mathbb{E}\{x^*_{\text{in}}(n)x_{\text{in}}(n-m)\}\mathbb{E}\{|x_{\text{in}}(n-m)|^{p-1}\} \nonumber \\
&\overset{(a)}{=} \sum_{p=1}^{P}\sum_{m=1}^{M}c_{p,m}R_{m,n}\mathbb{E}\{|x_{\text{in}}(n-m)|^{p-1}\} \neq  0, \nonumber
\end{align}
where $R_{m,n}$ in (a) is the auto-correlation function corresponding to $x_{\text{in}}(n)$ and $x_{\text{in}}(n-m)$. If $x_{\text{in}}$ is wide sense stationary in nature, then we have $R_{m,n}=R_{m}$ $\forall$ $n.$ Hence the proof.
\end{appendix}

\begin{figure}[t]
\centering\includegraphics[width=0.94\linewidth]{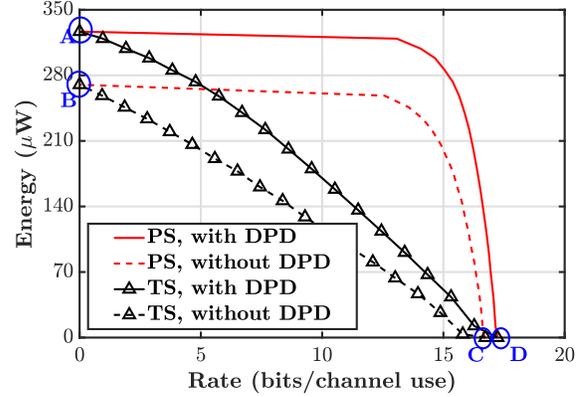}
\caption{Achievable rate-energy region with and without predistortion.}
\vspace{-5mm}
\label{fig:region}
\end{figure}

\bibliographystyle{IEEEtran}
\bibliography{krikidis_WCL0842_refs}

\end{document}